\newif\iffull
\newtheorem{observation}{Observation}
\newcommand{\leaveout}[1]{}
\date{}
\title{Visibility Representations of Toroidal and Klein-bottle Graphs}
\author{Therese Biedl
\orcidID{0000-0002-9003-3783}
\thanks{Supported by NSERC.  The author would like to thank Sam Barr
for helpful input.}
}
\institute{David R.~Cheriton School of Computer
Science, University of Waterloo, Waterloo, Ontario N2L 3G1, Canada.
\email{biedl@uwaterloo.ca}
}
\begin{document}

\maketitle
\begin{abstract}
In this paper, we study visibility representations of graphs that are embedded on a torus or a Klein bottle.  Mohar and Rosenstiehl showed that any toroidal graph has a visibility representation on a flat torus bounded by a parallelogram, but left open the question whether one can assume a rectangular flat torus, i.e., a flat torus bounded by a rectangle.   Independently the same question was asked by Tamassia and Tollis.   We answer this question in the positive.    With the same technique, we can also show that any graph embedded on a Klein bottle has a visibility representation on the rectangular flat Klein bottle.
\end{abstract}


\section{Introduction}

Visibility representations are one of the oldest topics studied in
graph drawing.   Introduced as {\em horvert-drawings} by Otten and
Van Wijk in 1978 \cite{OW78}, and independently as {\em S-representations} by
Duchet, Hamidoune, Las Vergnas and Meyniel in 1983 \cite{DHLM83}, 
they consist of assigning disjoint horizontal segments to vertices and
disjoint vertical segments to every edge such that for each edge the
segment 
ends at the two vertex-segments of its endpoints and intersects no 
other vertex-segment.  (Fig.~\ref{fig:K7_I}(d) gives an example.)
Later papers studied exactly which planar graphs
have such visibility representations \cite{RT86,TT86,Wis85} and 
generalized them to the rolling cylinder 
\cite{TT91}, M\"obius band \cite{Dean00}, projective plane \cite{Hut05}
or torus \cite{MR98}.    (There are numerous other generalizations,
e.g. to higher dimensions \cite{BEG+98}, or permitting rectangles for vertices
and horizontal and vertical edges \cite{BDHS97}, or permitting edges to go through a limited
set of vertex-segments \cite{DEG07}.)

The motivation for the current paper is the work by Mohar and Rosenstiehl
\cite{MR98}, who showed that any {\em toroidal graph} (i.e., a graph that can be
drawn on a torus without crossings) has a visibility representation on
the {\em flat torus}, i.e., a parallelogram $Q$ where opposite edges have been
identified.   They explicitly stated as open problem whether the same holds
for a {\em rectangular flat torus}, i.e., where $Q$ must be a rectangle---their 
method cannot be generalized to this case.  (See also Fig.~\ref{fig:M0}.)
The same question was asked independently earlier by Tamassia and Tollis \cite{TT91}.
This paper answers this question in the positive. 

\begin{theorem}
\label{thm:tor}
Let $G$ be a toroidal graph without loops.   Then $G$ has a visibility representation
on the rectangular flat torus.
\end{theorem}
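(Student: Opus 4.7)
The plan is to reduce the problem to a planar visibility representation on a cylinder by cutting the torus along a suitable non-contractible simple closed curve, and then to ensure that the resulting coordinates match up on the two copies of the cut so that the re-identification is axis-aligned rather than sheared. The key novelty compared with the Mohar--Rosenstiehl parallelogram construction is precisely this ``matching coordinates'' requirement on both pairs of opposite sides of the rectangle.

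First I would augment $G$ inside its toroidal embedding (adding edges and subdividing where needed) until it is sufficiently connected and every face is a disk of controlled shape, following standard preprocessing for visibility representations. Then I would select a non-contractible simple cycle $C$ consisting of edges of the augmented graph; cutting the torus along $C$ yields a cylinder whose two boundary circles are copies $C_\ell$ and $C_r$ of $C$. On this cylinder I would fix a second, transverse non-contractible cycle $C'$, pick a source vertex $s$ and sink vertex $t$ on $C'$, and build an $st$-orientation of the cylinder graph that is ``compatible with $C$'' in the sense that the order in which the vertices of $C$ are visited agrees on $C_\ell$ and on $C_r$. The standard planar visibility algorithm then converts such an $st$-orientation into horizontal vertex-segments (with $y$-coordinates given by the topological order) and vertical edge-segments (with $x$-coordinates given by the dual $st$-orientation), producing a planar visibility representation inside a rectangle.

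The main obstacle is arranging the $st$-orientation so that, after the algorithm runs, each vertex $v$ of $C$ receives the same $y$-coordinate on $C_\ell$ as on $C_r$, and so that corresponding edges on the two copies of $C'$ receive the same $x$-coordinate on top and bottom; only under these matching conditions does the rectangle glue up to the rectangular flat torus without shearing. A natural route is to work in the universal cover $\mathbb{R}^2$ of the torus and insist on a $\mathbb{Z}^2$-equivariant $st$-orientation of the lifted graph, so that the matching of coordinates across the cut becomes automatic; equivalently, one can set up an auxiliary planar graph obtained by cutting along both $C$ and $C'$ and argue, perhaps by a small surgery or exchange argument on a provisional $st$-orientation, that a symmetric one exists. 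Once this symmetric orientation is secured, the rest of the construction---running the planar visibility algorithm and verifying that the glued representation meets the visibility axioms on the rectangular flat torus---should follow standard lines.
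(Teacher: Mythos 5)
Your proposal correctly identifies the crux of the problem---forcing coordinates to match across the cut so that the fundamental domain glues as a rectangle rather than a parallelogram---but it stops exactly where the difficulty begins, and the devices you gesture at do not obviously resolve it. You cut the torus to a disk (along $C$ and $C'$) and then demand two simultaneous conditions: equal $y$-coordinates for the two copies of each vertex of $C$, and equal $x$-coordinates for the two copies of each edge crossing $C'$. In the longest-path coordinate assignment used by the classical planar visibility algorithm, a vertex's $y$-coordinate is the length of a longest directed path from $s$, and an edge's $x$-coordinate comes from the dual orientation; there is no reason these path lengths should agree for the two copies of a cut vertex even if the orientation is ``symmetric,'' and $\mathbb{Z}^2$-equivariance on the universal cover is in tension with the very definition of an $st$-orientation (a unique source and sink), so it is not clear what object you would actually construct. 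The phrase ``a small surgery or exchange argument'' is doing all the work, and that is precisely the work the theorem requires.

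The paper takes a structurally different route that avoids one of your two matching conditions entirely. It cuts the torus only once, to a \emph{cylinder} rather than a disk, and builds the visibility representation directly on the flat cylinder, so the left--right wraparound is automatic and never has to be ``matched.'' Second, it cuts through \emph{edges}, not through a vertex cycle $C$: each edge $(s_i,t_i)$ crossing the horizontal boundary is split into two half-edges ending at new vertices $s$ and $t$. The only remaining constraint is that the two half-edges of each split edge land on the same column. This is enforced not by the longest-path coordinate rule but by an explicit path system: a collection of edge-disjoint non-crossing $s$--$t$ paths $\pi_i$ (in a supergraph obtained by duplicating edges, justified via a max-flow/min-cut argument and right-first search), together with a bipolar orientation that directs every $\pi_i$ from $s$ to $t$, and an incremental cylinder visibility construction that keeps each $\pi_i$ on an exclusive column. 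Removing $s$ and $t$ and reconnecting the half-edges along those columns then glues to a rectangular flat torus. If you want to salvage your plan, the missing lemma is essentially the paper's ``path-constrained visibility representation'' theorem; without an analogue of it, the gluing step of your argument is unsupported.
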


There are quite a few graph drawing results for toroidal graphs;
see Castelli Aleardi et al.~\cite{CDF18} and the references therein 
for increasingly better results for straight-line drawings.
Their approach is to convert the toroidal graph into a
planar graph by deleting edges, then draw this planar graph, 
and then reinsert the edges.   (Other papers \cite{Hut05,MR98} instead use
a reduction approach, where the graph-size is reduced while staying
in the same graph class until some small graph is reached, draw this graph,
and then undo the reduction in the drawing.)   We follow the first approach
(i.e., delete edges to make the graph planar), but face a major challenge
when wanting to reinsert an edge $(v,w)$.   For this, we need
the segments of $v$ and $w$ to be visible across the horizontal boundary
of the fundamental rectangle, and in particular, to share an $x$-coordinate. 
We achieve this by keeping two halves of each removed edge, connecting
corresponding half-edges along paths, and then forcing these paths to
be drawn along columns; the ability to do so may be of independent interest.

\begin{figure}[t]
\includegraphics[width=0.18\linewidth,page=2]{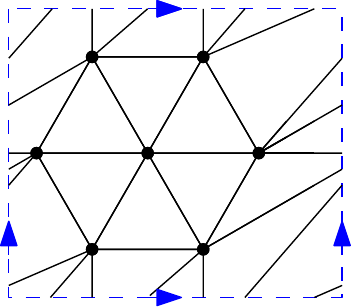}
\hspace*{\fill}
\includegraphics[width=0.22\linewidth,page=1]{K7.pdf}
\hspace*{\fill}
\includegraphics[width=0.18\linewidth,page=2]{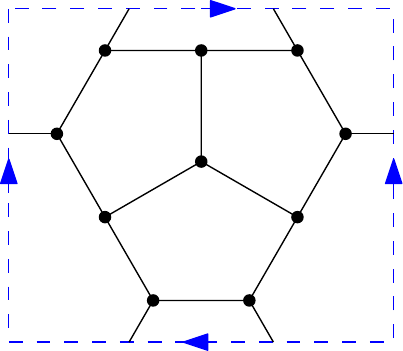}
\hspace*{\fill}
\includegraphics[width=0.22\linewidth,page=1]{Petersen.pdf}
\caption{The complete graph $K_7$ embedded on the rectangular flat torus and the Petersen-graph embedded on the rectangular flat Klein bottle.}
\label{fig:K7}
\label{fig:Pet}
\end{figure}
\section{Background}
\label{sec:background}

We assume familiarity with graph theory and planar graphs, see for
example Diestel's book \cite{Die12}.   
Throughout, let $G=(V,E)$ be a connected graph without loops, with $|V|=n$ and $|E|=m$.
A {\em map} $M$ on a surface $\Sigma$ is a 2-connected graph $G$ together with
an embedding of $G$ in $\Sigma$ such that every {\em face} (i.e., connected region of 
$\Sigma\setminus M$) is bounded by a simple cycle.   
Maps correspond naturally to rotation systems on the underlying
graphs, up to homomorphisms among the embeddings \cite{GT87}.   Here
a {\em rotation system} is a set of cyclic permutations $\rho_v$ (for
$v\in V$) where $\rho_v$ corresponds to the clockwise cyclic order in which the edges incident
to $v$ emanate from $v$ in the embedding.      
For ease of description we often assume that we have a map, though all
algorithmic steps could be performed on the rotation system alone.

We study surfaces that have a {\em flat representation} 
consisting of a {\em fundamental parallelogram} $Q$ in the plane with some 
sides identified.   (We may assume that two sides of $Q$ are horizontal, 
hence $Q$ has a left/right/top/bottom side.)   
A {\em (standing) flat cylinder} is obtained by identifying the left and right side of $Q$
in the same direction (bottom-to-top).  (We usually omit `standing' since we will
not discuss other kinds.)
A {\em flat torus} is obtained from a flat cylinder by identifying the top and
bottom side in the same direction (left-to-right), while
a {\em flat Klein bottle} is obtained from a flat cylinder by identifying the top and
bottom side in opposite direction.
Figs.~\ref{fig:K7}, \ref{fig:M0}, \ref{fig:K44} give some examples.   
A {\em rectangular flat torus} [\emph{rectangular Klein bottle}] is a flat torus [flat Klein bottle]
for which the fundamental parallelogram $Q$ is required to be a rectangle.

Flat representations
carry the local geometry of the plane; in particular when we speak of a
{\em segment} or an {\em $x$-interval} then we specifically permit it to
go across a side of the fundamental parallelogram $Q$.   So for example in a flat cylinder
$Q=[0,w]\times [0,h]$, 
an $x$-interval can have the form $[x',x'']$ for two $x$-coordinates $x'<x''$,
but it can also have the form $[0,x''] \cup [x',w]$ for some $x''<x'$.
A {\em row/column} of $Q$ is a horizontal/vertical line with integer
coordinate that intersects the interior of $Q$.

A {\em visibility representation} of a graph $G$ is a mapping of
vertices into non-overlapping horizontal segments (called {\em vertex-segments})
and of edges of $G$ into non-overlapping 
vertical segments (called {\em edge-segments}) such
that for each edge $(u,v)$, the associated edge-segment has its endpoints
on the vertex-segments corresponding to $u$ and $v$ and it does not intersect 
any other vertex-segment.   

\section{Creating Visibility Representations}

We first give an outline of our approach.   Quite similar to what was
was done for straight-line drawings of toroidal graphs 
\cite{CDF18}, we remove a set of edges
to convert the given graph into a planar graph.   In contrast to
the earlier work, we keep the edges but split each of them into two
`half-edges' that end at two new vertices $s,t$ (Section~\ref{subse:makePlanar}).
We will later need to re-connect these half-edges, and to this end,
choose a `path-system' that connects each pair of half-edges while
keeping all the paths non-crossing and (after duplicating some edges)
edge-disjoint (Section~\ref{subse:choosePaths}).    Then we create
a visibility representation on the flat cylinder
for which these paths are drawn vertically. To be able to do so we
first must argue that we can find an $st$-order that enumerates
vertices of all paths in order (Section~\ref{subse:stOrder}).  Then we
build the visibility representation (Section~\ref{subse:VR}).   Removing the
segments of $s$ and $t$ and possibly inserting more columns
gives the desired visibility representation.
Figs.~\ref{fig:K7_I} and \ref{fig:G_st}-\ref{fig:K7VR} illustrate the approach for $K_7$ and
the Petersen-graph. 

\def\myScale{0.68}
\begin{figure}[ht]
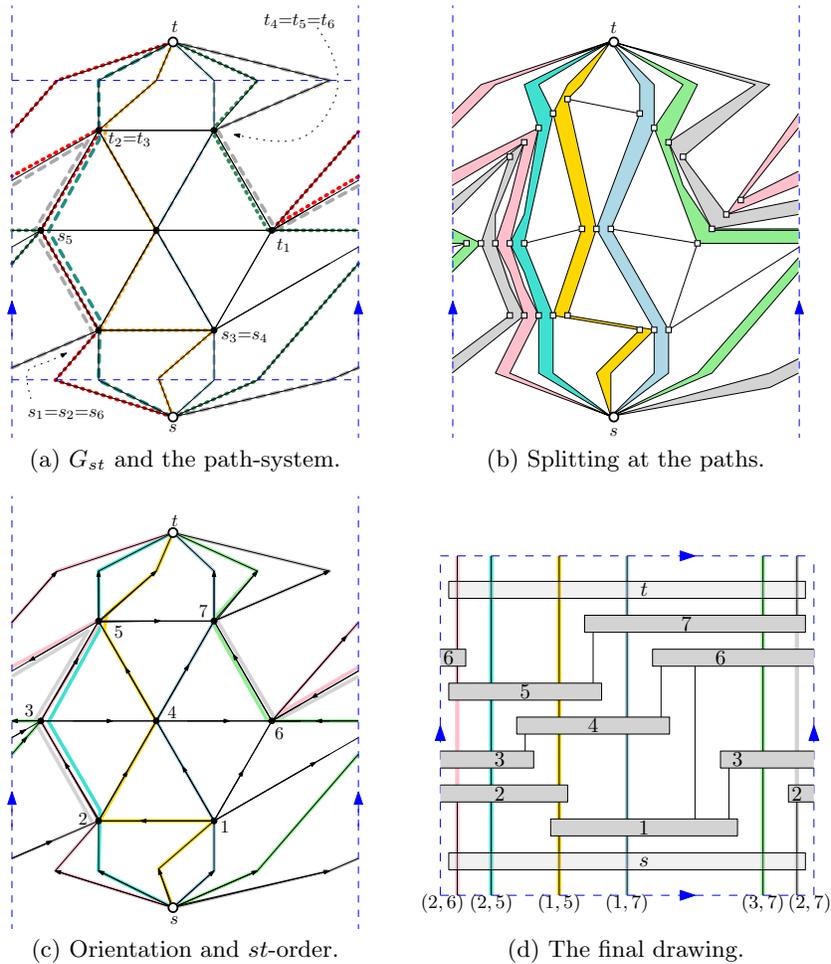

\hspace*{\fill}
\begin{tabular}{ccc}
{\includegraphics[scale=\myScale,page=3,trim=0 10 0 0,clip]{K7.pdf}}
& ~~~~~
& {\includegraphics[scale=\myScale,page=4,trim=0 10 0 0,clip]{K7.pdf}}  \\
(a) $G_{st}$ and the path-system. && (b) Splitting at the paths.\\[2ex]
{\includegraphics[scale=\myScale,page=5,trim=0 10 0 0,clip]{K7.pdf}}
& 
&{\includegraphics[scale=0.8,page=6,trim=0 10 0 0,clip]{K7.pdf}}\\
(c) Orientation and $st$-order. && (d) The final drawing.
\end{tabular}
\hspace*{\fill}
\caption{The construction for the complete graph $K_7$.}
\label{fig:K7_I}
\end{figure}

\subsection{Making the Graph Planar}
\label{subse:makePlanar}

In this section we explain how to modify the input graph $G$ to make
it planar.   We assume that $G$ has no loop and comes embedded on a flat realization $Q$
(either a torus or a Klein bottle).
We first modify this embedding to achieve the following:
(1) Every face is bounded by a simple cycle, so the embedding is a map.
(2) No edge crosses the horizontal boundary of $Q$ twice.   
(3) Parallelogram $Q$ is a rectangle.
(4) No vertex lies on the boundary of $Q$.  
(5) Edges intersect the boundary of $Q$ in a finite set of points, and do not use a corner of $Q$.
Conditions (1-5) can easily be achieved if arbitrary
curves are allowed for edges as follows:   (1) holds after adding sufficiently
many edges (which can be deleted in the final visibility representation), 
(2) can be achieved by re-routing the horizontal boundary of $Q$ along a so-called {\em tambourine}
\cite{CDF18}, (3) holds after a shear and (4-5) hold after locally re-routing.

Assume first that $G$ is toroidal, so $Q$ is a rectangular flat torus.
Enumerate the edges that intersect the
bottom side of $Q$ as $(s_i,t_i)$ (for $i=1,\dots,d$) from left to right,
named such that part
of the edge that goes upward from the bottom side ends at $s_i$ for $i=1,\dots,d$.
(This is feasible by condition (2) above.)
Create a new graph $G_{st}$ by removing edges $(s_i,t_i)$ for $i=1,\dots,s$,
adding a new vertex $t$ incident to $t_1,\dots,t_d$ and a new vertex $s$ 
incident to $s_1,\dots,s_d$.      
See Fig.~\ref{fig:K7_I}(a).

Now assume that $G$ is embedded on a rectangular flat Klein bottle $Q$ instead. 
We construct $G_{st}$ in almost the same way, but the enumeration of edges is different.
Let the edges that cross the bottom side of
$Q$ be $(s_1,t_d),\dots,(s_d,t_1)$ from left to right, 
named such that the part
of the edge that goes upward from the bottom side ends at $s_i$ for $i=1,\dots,d$.
Since the top and bottom sides of $Q$ are identified in opposite direction,
the order of edges along the top side of $Q$ is $(s_d,t_1),\dots,\allowbreak(s_1,t_d)$
from left to right.
Remove these edges and replace them by a vertex $s$ incident to $s_1,\dots,s_d$
and a vertex $t$ incident to $t_1,\dots,t_d$. 
See Fig.~\ref{fig:G_st}(a).

In both cases, by placing $t$ above the top side  of $Q$
and $s$ below the bottom side of $Q$, we obtain an embedding of $G_{st}$
on the flat cylinder, so it is a {\em plane} graph (i.e., drawn on the plane
with a fixed embedding).
The edges incident to $s$ lead to $s_1,\dots,s_d$ (in clockwise order)
and the edges incident to $t$ lead to $t_1,\dots,t_d$ (in counter-clockwise order).

\begin{observation}
Graph $G_{st}$ is 2-connected.
\end{observation}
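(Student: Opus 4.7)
My plan is to establish 2-connectedness of $G_{st}$ via the standard planar characterization: a connected planar graph is 2-connected if and only if every face of its embedding is bounded by a (simple) cycle. Since the construction already exhibits $G_{st}$ as planarly embedded on the flat cylinder, the task reduces to verifying connectedness and checking the face condition.

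For connectedness I would use a first-hit argument. Since $G$ is a map, it is 2-connected and hence connected. For any vertex $v\in V(G)$, take any $G$-path from $v$ to $s_1$ and truncate at the first vertex $u\in\{s_1,\dots,s_d,t_1,\dots,t_d\}$; since every removed edge has both endpoints in that set, the truncated subpath is preserved in $G_{st}$ and $u$ is adjacent to $s$ or $t$ via one of the new edges. To link $s$ and $t$ themselves, take a $G$-path from $s_1$ to $t_1$ that avoids $e_1$ (existing by 2-connectedness of $G$) and iterate the same first-hit idea: each time the walk reaches a set vertex, detour through $s$ or $t$ via the corresponding new edge. Since each hit eliminates at least one index, the iteration terminates at some $t_j$, yielding an $s$--$t$ walk in $G_{st}$.

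For the face analysis, faces of $G$ whose bounding cycle uses no removed edge remain faces of $G_{st}$ with the same simple-cycle boundary. A face of $G_{st}$ incident to $s$ fills an angular sector between two consecutive edges $(s,s_i),(s,s_{i+1})$, continuing through the gap between their crossings with the bottom side $b$ of $Q$ into a face $\tilde f_i$ of $G$. By condition (2) (no edge crosses $b$ twice) together with the left-to-right enumeration, the open arc of $b$ between its crossings at $e_i$ and $e_{i+1}$ lies in a single face $\tilde f_i$, and so the boundary of this $G_{st}$-face traces $s\to s_i\to P_i\to s_{i+1}\to s$, where $P_i$ is the subpath of the simple cycle bounding $\tilde f_i$ on the side of $b$ containing $s_i,s_{i+1}$. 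The faces incident to $t$ are handled symmetrically, and the construction for the Klein bottle is identical since the enumeration at $s$ and $t$ agrees with the toroidal case despite the orientation-reversing identification.

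The main obstacle is the possibility that $P_i$ itself contains another removed edge $e_j$, which occurs precisely when $b$ visits $\tilde f_i$ multiple times and causes several cap sectors at $s$ (and possibly $t$) to merge into a single face. I would resolve this by carefully tracking the face-boundary walk through each such detour: at any $s_j$ encountered on $P_i$, the walk follows $(s_j,s)$ into the cap and exits along the next edge at $s$, continuing through the subpath of the next face of $G$ adjacent to $b$. Because each new edge $(s,s_\ell)$ bounds exactly one $G_{st}$-face on each side and so is used only once along the walk, and because the pieces of $G$-face boundaries traversed are disjoint (each lies in a distinct $\tilde f_k$-piece of $b$'s trace), no vertex is revisited and the resulting boundary is a simple cycle. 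This gives the face condition for every face of $G_{st}$ and hence 2-connectedness.
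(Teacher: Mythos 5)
The central part of your face analysis matches the paper: $G_{st}$ is a plane graph, so 2-connectivity reduces to every face being a simple cycle; faces of $G$ not touching a removed edge survive intact; and a face at $s$ (respectively $t$) should be a path along a face boundary of $G$ closed off by two edges $(s,s_i),(s,s_{i+1})$ (respectively at $t$). Your explicit connectedness argument is fine but unnecessary once the face condition is established.

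The genuine gap is in your ``main obstacle'' paragraph. You correctly notice that a face $\tilde f_i$ of $G$ could be visited by the bottom side $b$ more than once, so that the subpath $P_i$ could itself contain another removed edge. But your resolution is wrong: you describe a face-boundary walk that ``follows $(s_j,s)$ into the cap and exits along the next edge at $s$'' and then conclude that ``no vertex is revisited.'' If the walk enters and leaves $s$ in the middle, then $s$ \emph{is} revisited; the boundary is a closed walk that passes through $s$ at least twice, hence not a simple cycle, and in that situation $s$ (or $t$) is in fact a cut vertex of $G_{st}$ --- so the Observation would simply be \emph{false}, and no bookkeeping about which edges $(s,s_\ell)$ are used once can rescue it. (The supporting claim ``each new edge bounds exactly one $G_{st}$-face on each side'' is also circular here, since it already presupposes 2-connectedness.) The correct way to close the gap is not to track the detour but to show that the detour cannot occur: the paper achieves condition (2) by routing $b$ through a \emph{tambourine}, a cylinder between two parallel non-contractible cycles containing no vertices, whose internal faces are quadrilaterals, each crossed by $b$ exactly once. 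Consequently each face of $G$ meets $b$ in at most one arc, so $P_i$ contains no removed edge, and each new face is bounded by exactly two edges at $s$ (or $t$) plus a simple subpath of a face boundary of $G$ --- which is precisely what the paper's terse proof implicitly relies on. Your write-up would be correct if you replaced the ``resolution'' paragraph with this observation that the obstacle is precluded by the choice of $b$.
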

\begin{proof}
Since $G_{st}$ is a plane graph, 2-connectivity is equivalent to all faces
being bounded by a simple cycle.     This holds for all faces of
$G$ by assumption.
The only faces of $G_{st}$ that are not in $G$ 
are those incident to $s$ and $t$. These consist of part of the boundary
of a face of $G$, plus two newly added edges that both end at $s$ (or both end
at $t$).   So the boundary of these faces are simple cycles as well.
\end{proof}

\begin{figure}[t]
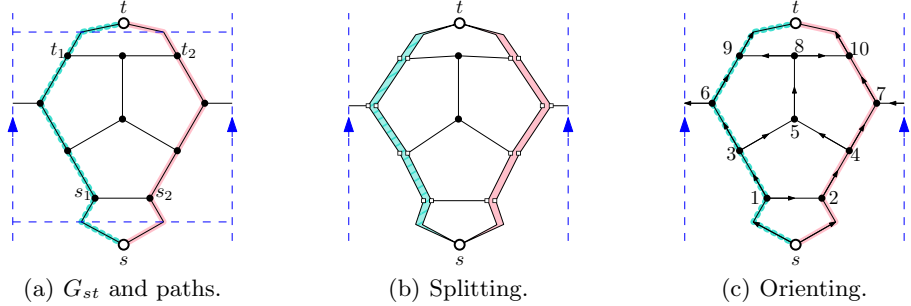

\subfigure[$G_{st}$ and paths.]{ \includegraphics[width=0.25\linewidth,page=3]{Petersen.pdf} }
\hspace*{\fill}
\subfigure[Splitting.]{ \includegraphics[width=0.25\linewidth,page=4]{Petersen.pdf} }
\hspace*{\fill}
\subfigure[Orienting.]{ \includegraphics[width=0.25\linewidth,page=5]{Petersen.pdf} }
\caption{The first few steps for the Petersen-graph from Fig.~\ref{fig:K7}(b).}
\label{fig:G_st}
\label{fig:splitPath}
\end{figure}

\subsection{Choosing Paths}
\label{subse:choosePaths}

We now show how to choose a set $\Pi$ of paths in $G_{st}$ that satisfy some properties.
A path is called {\em simple} if no vertex repeats.
Two simple edge-disjoint paths $\pi,\pi'$ are {\em non-crossing} if
at any vertex $v$ that is interior to both the paths only touch, i.e.,
the edges of the paths appear in order $\pi,\pi,\pi',\pi'$ in $\rho_v$.

\begin{lemma}
\label{lem:findPaths}
\label{lem:choosePaths}
There exists a planar graph $\hat{G}$ (obtained by duplicating edges of $G_{st}$)
and a set of simple edge-disjoint non-crossing paths $\pi_1,\dots,\pi_d$ in $\hat{G}$
such that path $\pi_i$ begins with $(s,s_i)$ and ends with $(t_i,t)$ for $i=1,\dots,d$.
\end{lemma}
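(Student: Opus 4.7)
The plan is to carry out three stages: find $d$ edge-disjoint $(s,t)$-paths in a suitable duplication $\hat G$ of $G_{st}$; uncross them using planarity; and finally verify that the endpoint matching is the desired $s_i\leftrightarrow t_i$.

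First I would form $\hat G$ from $G_{st}$ by replacing every edge not incident to $s$ or $t$ with $d$ parallel copies, drawn infinitesimally close to the original inside the same face so that $\hat G$ remains planar. The edges at $s$ (and symmetrically at $t$) then form an $(s,t)$-cut of size $d$, while the duplication makes every other $(s,t)$-cut have size at least $d$; Menger's theorem therefore yields $d$ edge-disjoint simple paths $P_1,\dots,P_d$ from $s$ to $t$. Each path must use a distinct edge at $s$, so I can relabel so that $P_i$ begins with $(s,s_i)$; each $P_i$ then ends with $(t_{\sigma(i)},t)$ for some permutation $\sigma$.

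Next I would uncross iteratively. Whenever two paths $P_i$ and $P_j$ share an interior vertex $v$ at which their four incident edges interleave in $\rho_v$, I swap their suffixes at $v$. The resulting walks use exactly the same edges (so edge-disjointness is preserved) and, after trimming any revisited vertices, are simple paths. A potential such as the total pairwise crossing number decreases strictly at each swap, so the process terminates at non-crossing paths $\pi_1,\dots,\pi_d$. Since all swaps occur at interior vertices, $\pi_i$ still begins with $(s,s_i)$; its final edge is $(t_{\tau(i)},t)$ for some (possibly new) permutation $\tau$.

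Finally I would argue $\tau=\mathrm{id}$ from the rigidity of planar non-crossing $(s,t)$-path systems: the clockwise cyclic order of the $\pi_i$'s at $s$ must equal their counter-clockwise cyclic order at $t$. By construction this order is $(\pi_1,\dots,\pi_d)$ at $s$, and the counter-clockwise order of edges at $t$ is $(t,t_1),\dots,(t,t_d)$, which forces $\pi_i$ to end with $(t_i,t)$. \textbf{The main obstacle} will be making this last step rigorous: in a plane graph where $s$ and $t$ lie in different faces, non-crossing matchings are in principle determined only up to a cyclic shift corresponding to ``winding'' around the annulus realization of the flat cylinder. I will need to exploit the specific placement of $s$ below $Q$ and $t$ above $Q$ — so that $s$ is in the outer face and $t$ in the opposing face — to exclude nontrivial windings, or alternatively choose the initial $P_i$'s in Stage 1 as shortest paths so that no winding arises from the outset and the uncrossing process cannot introduce any.
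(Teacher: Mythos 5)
Your plan matches the paper's up through Stage~1 (replace non-$s,t$-edges by parallel copies so the two trivial cuts become minimum, then apply Menger to obtain $d$ edge-disjoint $s$-$t$-paths), but you correctly identify that the real difficulty lies in Stage~3, and the workarounds you sketch there do not hold up. When $s$ sits in the face below $Q$ and $t$ in the face above, these are two \emph{distinct} faces of the plane graph $G_{st}$, i.e., the annulus picture you describe. In that situation a non-crossing $s$-$t$-path system determines the matching $s_i \mapsto t_{\sigma(i)}$ only up to a cyclic shift, and neither of your proposed fixes closes that gap. Choosing shortest initial paths does not preclude winding, and, more seriously, your uncrossing move (swapping suffixes at a shared interior vertex) conserves only the \emph{total} winding, not the winding of individual paths; it can therefore produce a non-crossing system whose matching has shifted even if the initial $P_i$ individually did not wind. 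So the statement ``this order is $(\pi_1,\dots,\pi_d)$ at $s$'' being matched against the counter-clockwise order at $t$ is exactly the unproven claim, and as stated it is false in general for annular path systems.

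The paper sidesteps this with a splitting trick you are missing: it first takes a single simple $s$-$t$-path $\pi$ through $s_1,t_1$ and \emph{cuts the plane graph open along $\pi$}, duplicating its interior vertices and edges. After splitting, the new face $f_\pi$ bounded by the two copies of $\pi$ contains \emph{both} $s$ and $t$, so one may declare $f_\pi$ to be the outer face and now $s,t$ lie on a common face. In a disk (not an annulus) a maximal non-crossing $s$-$t$-path system has a \emph{unique} matching, forced combinatorially by the cyclic orders at $s$ and $t$; there is no residual cyclic shift. (The paper uses $d+1$ parallel copies and produces $d+1$ paths because $(s,s_1)$ and $(t_1,t)$ each appear twice after the split; the extra path absorbs the duplicate copies and is discarded when the split is undone.) Also, rather than your generic uncrossing loop, the paper obtains non-crossing paths directly by right-first search from an $s,t$-outer-face embedding, which is cleaner and avoids having to prove termination and path-simplicity after suffix swaps. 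So the structure you would need to add is: reduce to the case where $s$ and $t$ share a face, via the cut-open-along-$\pi$ construction; without that reduction, Stage~3 genuinely fails.
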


Before giving the proof, we need to define the operation of 
{\em splitting} a map at a path $\pi$
(also used in Figs.~\ref{fig:K7_I}(b) and \ref{fig:splitPath}(b)).
Temporarily direct $\pi$ from one end to the other.   
Duplicate all
interior vertices of $\pi$ (say vertex $v$ becomes $v^\ell$ and $v^r$) and
duplicate all edges of $\pi$ correspondingly.
For any interior vertex $v$ of $\pi$, and any edge $e$ incident to $v$ but not on $\pi$,
we re-connect $e$ to end at $v^\ell$ [$v^r$] if
$e$ 
occurs before [after] the outgoing edge of $\pi$ at $v$ when enumerating
$\rho_v$ beginning with the incoming edge of $\pi$ on $v$.
Splitting at $\pi$ creates a new face $f_{\pi}$ bounded by the two copies of $\pi$.
\begin{proof}
Let $\pi$ be a simple path that begins with $(s,s_1)$ and ends with $(t_1,t)$;
this exists since $G_{st}$ is 2-connected.   Temporarily split graph $G_{st}$ at $\pi$ to obtain
a planar graph $\tilde{G}$.   The resulting new face $f_\pi$ contains both $s$ and $t$; for ease
of description we assume that $f_\pi$ is the outer-face of $\tilde{G}$.

Let $\tilde{G}^+$ be the graph obtained from $\tilde{G}$
by replacing any edge $e$ that is not incident to $s$ or $t$ by a multi-edge that has
$d+1$ copies of $e$. Any
$s$-$t$-cut of ${\tilde{G}}^+$ either consists of the edges incident to $s$
(then it has size $d+1$ since $(s,s_1)$ exists twice in $\tilde{G}$) or 
of the edges incident to $t$ (then it likewise has size $d+1$), or it contains some
edge $e$ not incident to either $s$ or $t$ and so 
has size at least $d+1$.  By the max-flow-min-cut
theorem  
therefore $\tilde{G}^+$ has a flow of
value $d{+}1$ from $s$ to $t$; equivalently, it has $d{+}1$ edge-disjoint
paths $\pi_1,\dots,\pi_{d+1}$ from $s$ to $t$.   Since $s$ and $t$ are both
on the outer-face we can find these paths
using {\em right-first search} \cite{RWW97}; this will automatically
make them crossing-free.

Since the paths are crossing-free and use all edges incident to $s,t$, and since $s$ and $t$ are 
on the outer-face, there is no choice which pair of edges must be the first and last
on each path.   The clockwise order of edges at $s$ (beginning after the outer-face)
is $(s,s^r_1),\dots,(s,s_d),(s,s^\ell_1)$.
The counter-clockwise order of edges at $t$ (beginning
after the outer-face) is $(t,t^r_1),\dots,(t,t_d),(t,t^\ell_1)$.   
Therefore path $\pi_i$ begins with $(s,s_i)$ and end with $(t_i,t)$ for $i=2,\dots,d$,
while $\pi_1$ and $\pi_{d+1}$ use the copies of $s_1$ and $t_1$.

To obtain $\hat{G}$, re-combine any two vertices $v^\ell$ and $v^r$ that resulted
from splitting an interior vertex $v$ of $\pi$, and keep all edges of $\tilde{G}^+$ except $(s,s_1^\ell)$
and $(t_1^\ell,t)$.
Since these two edges were used by $\pi_{d+1}$, they were used by no other path in $\pi_1,\dots,\pi_d$,
and we have hence obtained our desired path-system.
\end{proof}

\subsection{A Path-Constrained $st$-order}
\label{subse:stOrder}

By Lemma~\ref{lem:choosePaths}, we can fix a supergraph $\hat{G}$ of $G_{st}$
and a {\em path-system} $\Pi$, i.e.,
a set of simple edge-disjoint non-crossing paths from $s$ to $t$.
To draw $\hat{G}$, we add vertices one-by-one, and to draw the paths in $\Pi$
vertically, we require a vertex-order with special properties.  

We need some definitions.  A {\em bipolar orientation} 
is an assignment of directions to the edges that is acyclic and has exactly
one source and one sink.  An {\em $st$-order} is a vertex order $v_1,\dots,v_n$ 
such that orienting all edges from the lower-indexed to the higher-indexed
vertex gives a bipolar orientation.   Vice versa, for any bipolar orientation,
enumerating the vertices in topological order gives an $st$-order.   It is
well-known that any 2-connected graph has a bipolar orientation, even if we
fix a-priori which vertices should be the source and sink \cite{LEC67}; it
can be found in linear time \cite{ET76}.

We say that a bipolar orientation {\em respects a path system} $\Pi$ if
every path in $\Pi$ is directed from $s$ to $t$ in the bipolar orientation.
We phrase the following result for an arbitrary graph $H$ since it does
not depend on the graph stemming from a toroidal or Klein-bottle graph
and may be of independent interest.

\begin{lemma}
\label{lem:goodSTorder}
Let $H$ be a 2-connected plane graph with two vertices $s\neq t$. 
Let $\Pi$ be a set of simple edge-disjoint crossing-free 
paths from $s$ to $t$. Then $H$ has a bipolar orientation that respects $\Pi$
and has source $s$ and sink $t$.
\end{lemma}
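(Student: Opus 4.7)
The plan is induction on $|\Pi|$. The base case $|\Pi|=0$ follows immediately from the classical Lempel--Even--Cederbaum theorem cited just before the lemma: every 2-connected graph admits a bipolar orientation with any prescribed source and sink.

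For the inductive step I would pick any $\pi\in\Pi$ and apply the splitting operation of Section~\ref{subse:choosePaths} to $H$ at $\pi$, producing a plane graph $\tilde H$ in which $\pi$ becomes the boundary of a newly created face $f_\pi$ traversed by two copies of $\pi$ meeting at $s$ and $t$. First I verify that $\tilde H$ inherits 2-connectedness from $H$ and that each remaining path $\pi'\in\Pi\setminus\{\pi\}$ lifts to a simple, edge-disjoint, crossing-free path from $s$ to $t$ in $\tilde H$: the non-crossing hypothesis forces the edges of $\pi'$ at any interior vertex $v$ of $\pi$ to be contiguous in $\rho_v$ and to sit on a single side of $\pi$, so $\pi'$ uses exactly one of the two copies $v^\ell, v^r$. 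After re-embedding $\tilde H$ so that $f_\pi$ is its outer face, the inductive hypothesis supplies a bipolar orientation of $\tilde H$ with source $s$, sink $t$, respecting $\Pi\setminus\{\pi\}$. Because $s,t$ lie on $\partial f_\pi$ and are the global source and sink, a standard property of plane bipolar orientations forces the two sides of $\partial f_\pi$ to be monotone $s$-to-$t$ directed paths, so both copies of $\pi$ are directed from $s$ to $t$. Merging $v^\ell$ with $v^r$ for every interior $v$ of $\pi$ then restores $H$ with $\pi$ (and every other path of $\Pi$) oriented $s\to t$; $s$ and $t$ remain the unique source and sink, since each merged vertex inherits at least one in-edge and one out-edge from each copy.

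The step I expect to be the main obstacle is acyclicity of the merged orientation: a hypothetical directed cycle through a merged vertex $v$ would lift to a directed $v^\ell$-to-$v^r$ path in $\tilde H$. I plan to rule this out topologically, noting that, with $f_\pi$ as outer face, $v^\ell$ and $v^r$ lie on opposite $s$-to-$t$ arcs of the outer boundary; any directed path between them inside the $st$-planar drawing of $\tilde H$ would pair with a boundary arc to enclose either $s$ or $t$ in a bounded region, contradicting that $s$ and $t$ are on the outer face and are the only source and sink. Should this argument prove too loose for an arbitrary bipolar orientation produced by the induction, the fallback is to strengthen the inductive hypothesis so as to produce a specific orientation (for instance, a leftmost LEC-style numbering of $\tilde H$ aligned with $\partial f_\pi$) for which the merge is tautologically consistent.
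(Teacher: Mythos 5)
Your approach has the same core move as the paper --- split the graph at the paths, take a bipolar orientation of the split graph, observe that the boundary of a split face must be directed from $s$ to $t$, and merge back --- but you do it one path at a time via induction, whereas the paper splits at \emph{all} paths of $\Pi$ simultaneously and applies the ``unique source and sink per face'' fact once, to every face $f_\pi$ at the same time. The paper's route avoids the lifting bookkeeping you flag (that the paths of $\Pi\setminus\{\pi\}$ survive the split as edge-disjoint non-crossing paths of $\tilde H$); that is a stylistic simplification worth knowing, but not the main issue.

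The genuine gap is exactly the one you suspected: acyclicity of the merged orientation, and the topological argument you sketch does not close it. With $f_\pi$ as outer face, a directed $v^\ell$-to-$v^r$ path paired with the outer boundary arc through $s$ bounds a region that has $s$ \emph{on its boundary}, not strictly in its interior (and likewise for the arc through $t$), so no contradiction arises. Moreover such a path really can exist for an arbitrary bipolar orientation of $\tilde H$. Take $H$ with vertices $s,v_1,v_2,x,t$, let $\pi$ be the path $s,v_1,v_2,t$, and add the edges $(v_1,x)$, $(v_2,x)$, $(s,x)$, $(x,t)$, embedded so that $(v_1,x)$ lies on one side of $\pi$ and $(v_2,x)$ on the other (this $H$ is a 2-connected plane graph). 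Splitting at $\pi$ sends $(v_2,x)$ to $v_2^\ell$ and $(v_1,x)$ to $v_1^r$; the split graph then admits a plane bipolar orientation with source $s$, sink $t$, in which both copies of $\pi$ are directed $s$-to-$t$ and which contains $v_2^\ell\to x\to v_1^r$. Merging gives the directed cycle $v_1\to v_2\to x\to v_1$. So your fallback --- strengthening the inductive hypothesis to produce a \emph{specific} orientation whose merge is consistent, rather than an arbitrary one --- is not an optional refinement but an essential and currently unfinished step. Be aware that the paper's own proof is equally terse at the corresponding ``undoing the splitting'' line and the same example applies there as well, so the paper does not hand you the missing argument to borrow; a careful proof of this lemma must either restrict the choice of bipolar orientation on the split graph or give a separate acyclicity argument for the merge.
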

\begin{proof}
Consider the graph $\hat{H}$ obtained from $H$ by splitting $H$ at each path in $\Pi$.   
See Figs.~\ref{fig:K7_I}(b) and \ref{fig:splitPath}(b).
Any face of $\hat{H}$ is either a face of $H$ (then it is a simple cycle
since $H$ is 2-connected)    or $f$ is bounded by the two copies of some
path $\pi\in \Pi$ (then it is a simple cycle since $\pi$ is simple).
So $\hat{H}$ is
2-connected and has a bipolar orientation $\hat{D}$ with source $s$ and
sink $t$.

It is well-known \cite{TT86} that in $\hat{D}$
any face has a unique source and sink.   
In any face $f_\pi$ bounded by two copies of some $\pi\in \Pi$, 
the unique source is $s$ and the unique sink is $t$.   Therefore both copies
of $\pi$ are directed from $s$ to $t$ and undoing the splitting gives
the desired orientation.
\end{proof}

\subsection{Path-Constrained Visibility Representations}
\label{subse:VR}

In this section, we give an easy construction of a 
visibility representation on the flat cylinder
where a given path-system $\Pi$ is drawn vertically.
Formally, we say that a path $\pi$ lies {\em on an exclusive column} $\ell$ (in 
a visibility representation $\Gamma$)
if all edges of $\pi$ are represented by segments on $\ell$, and column 
$\ell$ intersects no vertex- or edge-segment except the ones that belong 
to vertices/edges of $\pi$.

Our approach to create visibility representations
is quite different from prior constructions 
\cite{Hut05,MR98,OW78,RT86,TT86,TT91,Wis85},
which either read the coordinates for the segments
directly from the orientation (using the length of the
longest paths in the primal and dual graph), or reduced the graph
(or its dual) by removing an edge somewhere in the graph, creating 
a representation recursively, and expanding.   In contrast to this,
we use here an incremental approach which resembles more the
incremental approaches taken for straight-line drawings \cite{CDF18,FPP90} 
or orthogonal drawings \cite{BK98}. This uses a vertex
ordering and adds the vertices to the drawing one-by-one.

\begin{theorem}
\label{thm:cylinder}
\label{thm:cyl}
Let $H$ be a 2-connected plane graph with two vertices $s,t$ 
and let $\Pi$ be a set of simple edge-disjoint non-crossing paths from $s$ to $t$.
Then $H$ has a visibility representation on the flat cylinder such that
each $\pi\in \Pi$ lies on an exclusive column.
\end{theorem}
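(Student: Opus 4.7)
The plan is to build the representation incrementally, in the spirit of a TT86-style sweep, processing vertices in an $st$-order coming from Lemma~\ref{lem:goodSTorder} while maintaining as an invariant that every $\pi\in\Pi$ already lives on a reserved vertical column. To set up, I apply Lemma~\ref{lem:goodSTorder} to obtain a bipolar orientation of $H$ with source $s$, sink $t$, that directs each path in $\Pi$ from $s$ to $t$, and I fix the resulting $st$-order $v_1=s, v_2,\dots, v_n=t$. Since the paths in $\Pi$ are non-crossing and all pass through both $s$ and $t$, their outgoing edges at $s$ (equivalently, their incoming edges at $t$) are encountered in a well-defined cyclic order around the cylinder; I reserve cylinder columns $c_1,\dots,c_d$ in that cyclic order, with ample spacer room between consecutive reserved columns for the edges that do not lie on any path.

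I then sweep through $v_1,\dots,v_n$ in order, maintaining a partial visibility representation together with a collection of ``dangling'' upward edge-segments, one for each already-emitted outgoing edge not yet capped at its upper endpoint. When processing $v_{i+1}$, I set $y(v_{i+1}) = i+1$ and choose the $x$-range of its horizontal segment as the smallest cyclic arc containing the $x$-coordinates of all dangling incoming edges of $v_{i+1}$ as well as the reserved columns $c_j$ of every path $\pi_j$ that passes through $v_{i+1}$. For its outgoing edges I emit fresh dangling segments: each outgoing edge that continues some path $\pi_j$ is placed on column $c_j$, while the remaining outgoing edges get fresh $x$-coordinates inside the adjacent spacer region, all respecting the cyclic order of $\rho_{v_{i+1}}$. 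Once $v_n=t$ has been processed, every dangling segment is capped; by construction each $\pi_j$ occupies column $c_j$ and nothing else does, so the cyclic identification of the leftmost and rightmost spacer yields the desired representation on the flat cylinder.

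The hard part of the argument is showing that the sweep step is always executable, i.e.\ that the proposed $x$-range for $v_{i+1}$ is a contiguous arc of the cylinder, that reserved columns belonging to paths not incident to $v_{i+1}$ never fall inside this arc, and that outgoing edges of different future vertices can be allotted disjoint columns inside the spacer regions. All three points reduce to combinatorial consequences of $\Pi$ being non-crossing: if paths $\pi_{j_1},\dots,\pi_{j_t}$ meet at a vertex $v$, their edges at $v$ appear in a blocked pattern inside $\rho_v$, so the reserved columns $c_{j_1},\dots,c_{j_t}$ can be taken consecutively on the cylinder; and since the bipolar orientation from Lemma~\ref{lem:goodSTorder} already orients each path consistently from $s$ to $t$, the incoming path-edge at $v$ is guaranteed to be on the correct reserved column when $v$ is processed. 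Verifying these conditions carefully, and establishing that the induction goes through with a uniform column-layout around the cylinder, is where I expect the bulk of the technical work to sit.
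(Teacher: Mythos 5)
Your high-level plan is the same as the paper's: extract the $st$-order from Lemma~\ref{lem:goodSTorder}, sweep through $v_1,\dots,v_n$ maintaining a partial cylindrical drawing with columns reserved for the path edges that are currently ``dangling,'' and complete each path on the column it started on. The invariants you sketch (dangling columns in cyclic cut order, paths on exclusive columns) match the paper's invariants almost exactly. That said, there are two places where your sketch goes wrong or leaves a real hole.

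First, you attribute the contiguity of $v_{i+1}$'s $x$-range (and the fact that no reserved column of a non-incident path lands inside it) to ``combinatorial consequences of $\Pi$ being non-crossing.'' That is a misattribution. What you actually need is that the set $E_{i+1}^-$ of incoming edges at $v_{i+1}$ occupies a consecutive arc of the cyclic cut $E_{i{:}i{+}1}$, and this is a property of bipolar orientations in $2$-connected plane graphs (the Lempel--Even--Cederbaum fact the paper invokes as \cite{LEC67}); it holds even if $\Pi$ is empty. The non-crossing hypothesis on $\Pi$ is used only for a much more local statement, namely that the incoming and outgoing path-edges at $v_{i+1}$ interleave in the block pattern $\pi,\pi,\pi',\pi'$, so that each path's column can be re-used for its outgoing edge without forcing a crossing. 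Without explicitly invoking the consecutivity fact, the claim that the chosen arc is a valid $x$-range is unproved, and this is precisely the load-bearing step of the induction.

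Second, pre-reserving columns $c_1,\dots,c_d$ with a fixed amount of ``ample spacer room'' is fragile: you do not know in advance how many non-path edges will need to emanate between two consecutive path columns at any given height, so any fixed allocation can run out. The paper avoids this by inserting fresh columns on the fly (stretching the $x$-axis), which is what makes the induction close cleanly. Your version can be repaired either by adopting dynamic insertion or by the crude bound of pre-allocating one spare column per edge of $H$ in each gap, but as written the step ``get fresh $x$-coordinates inside the adjacent spacer region'' is not justified. Neither issue changes the overall strategy---it is the paper's---but both the \cite{LEC67} consecutivity lemma and a mechanism for creating new columns need to be made explicit for the proof to stand.
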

\begin{proof}
Fix a bipolar orientation using Lemma~\ref{lem:goodSTorder} and 
extract an $st$-order $v_1,\dots,v_n$ from it; we know $v_1=s$ and $v_n=t$
and the numbers along any path in $\Pi$ increase from $s$ to $t$.
For $i=1,\dots,n$ let $H_i$ be the subgraph induced by $v_1,\dots,v_i$ and
let the {\em cut} $E_{i{:}i{+}1}$ be the set of all edges $(v_h,v_j)$ with
$h\leq i<j$.     There is a natural cyclic order of the edges in $E_{i{:}i{+}1}$
implied by the embedding of $H$ (specifically, if we contracted
the vertices $v_1,\dots,v_i$ into a supernode, then the order of $E_{i{:}i{+}1}$ 
would be the clockwise order of edges at this supernode).
We will use induction on $i$ to create a visibility representation of $H_i$
on a flat cylinder that satisfies the following for $i<n$:
\begin{enumerate}
\item 
Every edge $e=(v_h,v_j)$, $h{<}j$ in cut $E_{i{:}i{+}1}$ is associated with a column
that intersects $v_h$ and that is empty above $v_h$.
\item 
The left-to-right order of columns associated with $E_{i{:}i{+}1}$ respects the cyclic
order of edges in $E_{i{:}i{+}1}$.
\item 
For any path $\pi\in \Pi$, the sub-path of $\pi$ in $H_i$ lies on an
exclusive column, and the same column is associated with the unique
edge of $\pi$ in $E_{i{:}i{+}1}$.
\end{enumerate}

Fig.~\ref{fig:VR}(a-b) illustrates the following construction.
For $i=1$, we create the desired visibility representation simply by
defining a horizontal line segment $s(v_1)$ for $v_1$ with $y$-coordinate 0
and width $|E_{1{:}2}|$, and assigning columns intersecting $s(v_1)$
to edges in $E_{1{:}2}$ in the correct order.   

For $i>1$, assume we have created a visibility representation of $H_{i-1}$
already.  Define edge-sets $E_i^-:=\{(v_h,v_i): h<i\}$ and
$E_i^+:=\{(v_i,v_j): i<j\}$; the former is non-empty by $i>1$ since we have an $st$-order.
It is well-known \cite{LEC67} that $E_i^-$ is consecutive in
the cyclic order of edges in $E_{i{-}1{:}i}$.   By the invariant therefore 
there exists an $x$-interval  $X_i$ on the flat cylinder that intersects all columns
associated with edges in $E^-_i$ in its interior and intersects no other columns
associated with $E_{i{-}1{:}i}$.    Define the segment $s(v_i)$ of $v_i$ to 
have $x$-range $X_i$ and a $y$-coordinate that is higher than the one of all its
neighbours in $E_i^-$. These edges can then be completed along
their associated columns. 

To associate columns with $E_i^+$, we insert new
columns as needed.
First consider any edge $e\in E_i^+$ in some path $\pi\in \Pi$.
Since $\pi$ begins at $s$ and $i>1$, and since indices increase along $\pi$, 
some edge $e'\in E^-_i$ also belongs to $\pi$.   Associate the column of $e'$ with $e$.
Notice that this associates columns in the correct order, because
if multiple paths $\pi_1,\dots,\pi_k\in \Pi$ all went through $v_i$,
then the counter-clockwise order of their edges in $E_i^-$ at $v_i$
must be the same as the clockwise order of their edges in $E_i^+$ at $v_i$,
otherwise two of these paths would cross at $v_i$.

Now consider any edge $e\in E_i^+$ that does not belong to a path in $\Pi$.
Assign a ray upward
from $s(v_i)$ to $e$, choosing rays such that all edges in $E_i^+$ use
distinct rays/columns and their order reflects the order of edges
at $v_i$.   
By stretching horizontal
segments as needed, we can re-assign coordinates so that all inserted rays
lie on integer coordinates, hence become new columns.
This gives the desired visibility representation of $H_i$.
\end{proof}

\begin{figure}[ht]
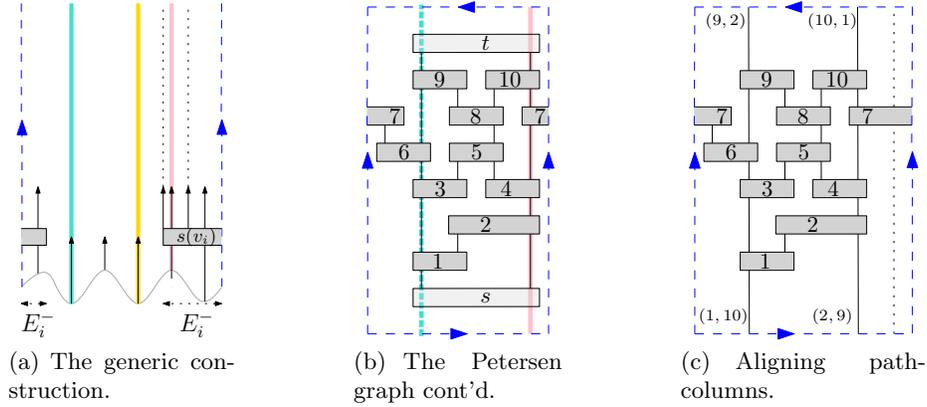

\subfigure[The generic construction.]{
\includegraphics[height=45mm,page=8]{Petersen.pdf}
}
\hspace*{\fill}
\subfigure[The Petersen graph cont'd.]{
\includegraphics[height=45mm,page=6]{Petersen.pdf}
}
\hspace*{\fill}
\subfigure[Aligning path-columns.]{
\includegraphics[height=45mm,page=7]{Petersen.pdf}
}
\caption{Creating visibility representations. } 
\label{fig:VR}
\label{fig:K7VR}
\end{figure}

\subsection{Putting It All Together}

We now have all ingredients to prove our main result (Theorem~\ref{thm:tor}): Any toroidal
graph $G$ without loops has a visibility representation on the rectangular flat torus.
See Fig.~\ref{fig:K7_I} for the entire process.

\begin{proof}
Add edges to $G$ until all its faces are simple cycles.
As described in Subsections~\ref{subse:makePlanar}-\ref{subse:VR},
split $G$ at edges $(s_i,t_i)$ (for $i=1,\dots,d$) to obtain $G_{st}$,
find a supergraph $\hat{G}$ with a path-system $\Pi$ where path $\pi_i$ begins with $(s,s_i)$ and
ends with $(t_i,t)$, find an orientation  that respects $\Pi$, and find a visibility representation $\Gamma$
of $\hat{G}$ on the flat cylinder $Q$ such that $\pi_i$ is
drawn along an exclusive column $\ell_i$.   Remove the segments that
represent $s$ and $t$ and 
complete $(s_i,t_i)$ along column $\ell_i$.
After re-interpreting
$Q$ as a rectangular flat torus this gives the desired visibility
representation of $G$ after deleting all added edges.
\end{proof}

With a bit more care when reconnecting edges, 
the same approach also works for Klein-bottle graphs.

\begin{theorem}
\label{thm:Klein}
Let $G$ be a graph without loops embedded on the Klein bottle.   Then $G$ has a visibility representation
on the rectangular flat Klein bottle.
\end{theorem}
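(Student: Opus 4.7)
The approach follows the proof of Theorem \ref{thm:tor} nearly verbatim, adding only one extra condition at the final reconnection step to accommodate the twisted top-bottom identification of the Klein bottle. I would first add edges so that every face of $G$ is a simple cycle, apply the construction of Subsection \ref{subse:makePlanar} for the Klein bottle to obtain $G_{st}$ by removing the edges $(s_1,t_d),\dots,(s_d,t_1)$ crossing the bottom and introducing $s$ and $t$, then invoke Lemma \ref{lem:choosePaths} and Theorem \ref{thm:cyl} to obtain a supergraph $\hat{G}$ with a path system $\Pi=\{\pi_1,\dots,\pi_d\}$ (each $\pi_i$ running from $s$ through $s_i$ and $t_i$ to $t$) and a visibility representation of $\hat{G}$ on the flat cylinder $Q$ with each $\pi_i$ lying on an exclusive column $\ell_i$.

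The essential new ingredient concerns the $x$-coordinates of the path columns. Because the Klein-bottle identification glues $(x,0)$ on the bottom to $(w-x,h)$ on the top, the removed edge $(s_i,t_{d-i+1})$ can be reinserted as a single vertical segment through this identification only if the column of $s_i$ and the column of $t_{d-i+1}$ are mirror images in $x$, i.e., $x(\ell_i)+x(\ell_{d-i+1})=w$. I would therefore insist on a palindromic placement of the path columns around $w/2$. By the construction of Theorem \ref{thm:cyl} the path columns already appear in the left-to-right order $\ell_1,\dots,\ell_d$ (matching the clockwise order $s_1,\dots,s_d$ at $s$), so I can reassign $x$-coordinates placing $\ell_i$ at position $i\cdot N$ for an integer $N$ large enough to leave room in each resulting gap for the non-path columns that originally sat there; this gives $w=(d+1)N$ and satisfies $x(\ell_i)+x(\ell_{d-i+1})=w$ for every $i$.

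With this symmetry enforced, I would remove the segments representing $s$ and $t$, and for each $i$ draw the edge $(s_i,t_{d-i+1})$ as the union of the empty portion of column $\ell_i$ below $s_i$ and the empty portion of column $\ell_{d-i+1}$ above $t_{d-i+1}$. Column exclusivity guarantees that these two stubs are unobstructed, and the palindromic symmetry guarantees that they meet at the identified boundary and therefore form one valid vertical segment on the Klein bottle. Reinterpreting $Q$ as the rectangular flat Klein bottle and discarding the edges added to triangulate yields the desired visibility representation of $G$.

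The main conceptual hurdle is recognizing exactly which additional constraint the Klein-bottle identification imposes---namely the palindromic $x$-placement of the $d$ path columns---and confirming that Theorem \ref{thm:cyl} leaves enough freedom to enforce it. Since the path columns already emerge in the correct order and the non-path columns can be absorbed into arbitrarily large gaps, the enforcement amounts to nothing more than a straightforward rescaling of $x$-coordinates.
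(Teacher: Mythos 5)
Your proof is correct and follows essentially the same approach as the paper: build $G_{st}$ with the reversed labelling, get the path system and cylinder representation from Lemma~\ref{lem:choosePaths} and Theorem~\ref{thm:cyl}, and then enforce the symmetry $x(\ell_i)+x(\ell_{d+1-i})=w$ so that the stubs meet across the twisted identification. The only (minor) divergence is in how that symmetry is enforced---the paper inserts empty columns incrementally next to $\ell_i$ or $\ell_{d+1-i}$ for $i=1,\dots,\lfloor d/2\rfloor$, whereas you re-space the path columns uniformly at multiples of a large $N$; both are valid stretchings of the cylinder that preserve the representation and its exclusive columns.
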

\begin{proof}
Exactly as in the previous proof, create a visibility representation $\Gamma$
of $\hat{G}$ on the flat cylinder $Q$ such that $\pi_i$ is
drawn along an exclusive column $\ell_i$.   Remove the segments that
represent $s$ and $t$ and extend $(s_i,s)$ and $(t_i,t)$ along $\ell_i$
until they reach the horizontal boundary of $Q$.   

We are not quite done yet, because we must ensure that column $\ell_i$ 
`lines up' with column $\ell_{d+1-i}$ (for $i=1,\dots,\lfloor d/2 \rfloor$)
so that edges $(s_i,t_{d+1-i})$ and $(s_{d+1-1},t_i)$ are connected
correctly when interpreting $Q$ as the flat Klein bottle.   This is
easily achieved by inserting columns.  Namely, assume $Q$ has
$x$-range $[0,w]$ and let $x(\ell)$ denote the $x$-coordinate of 
column $\ell$.   For $i=1,\dots,\lfloor d/2\rfloor$, while 
$x(\ell_i)<w{-}x(\ell_{d+1-i})$, insert an empty column to the left of $\ell_i$,
and while $x(\ell_i)>w-x(\ell_{d+1-i})$, insert an empty column to the right of $\ell_{d+1-i}$ 
See Fig.~\ref{fig:VR}(c).
This maintains distances of $\ell_1,\dots,\ell_{i-1}$ to the left boundary
and distances of $\ell_{d+2-i},\dots,\ell_d$ to the right boundary of $Q$.
So performing this for $i=1,\dots,\lfloor d/2 \rfloor$ gives the desired
visibility representation on the flat Klein bottle.
\end{proof}

We note here that our visibility representations exactly respect the given
embedding.   Under this restriction, the condition `no loops' cannot be avoided.   
(This was essentially observed by Mohar and Rosenstiehl \cite{MR98} already.)
Namely, let $M_0$ be a graph with a single vertex 
$v$ and two loops $\ell_1,\ell_2$ such that $\rho_v=\langle \ell_1,\ell_2,\ell_1,\ell_2\rangle$.
This is toroidal, but has no visibility representation on the rectangular flat torus that
respects the embedding since the rotation scheme at $v$ in such an embedding is
necessarily $\ell_1,\ell_1,\ell_2,\ell_2$.

\begin{figure}[ht]
\hspace*{\fill}
\subfigure[~]{\includegraphics[scale=1,page=4]{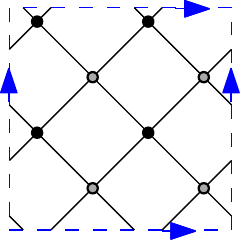}}
\hspace*{\fill}
\subfigure[~]{\includegraphics[scale=1,page=5]{K44.pdf}}
\hspace*{\fill}
\subfigure[~]{\includegraphics[scale=1,page=6]{K44.pdf}}
\hspace*{\fill}
\caption{(a) Graph $M_0$.   (b) The only possible visibility representation on a rectangular flat torus.   (c) An embedding-preserving visibility on the flat torus.} 
\label{fig:M0}
\end{figure}


\subsection{Grid-Size}

We can give a bound on the grid-size of Theorem~\ref{thm:tor},
assuming that the input is already a map (i.e., all faces 
are simple cycles).
We say that a visibility representation {\em has
grid-size $w\times h$} 
if the fundamental rectangle $Q$ intersects $w$ columns and $h$ rows,
not counting the boundaries of $Q$.
In our current approach, the visibility representation
$\Gamma_{st}$ of $G_{st}$ uses significantly more area than it needs to 
since we may duplicate quite a few edges of $G_{st}$ to obtain the path
system (see also the discussion below).   However, as
for all visibility representations, one should apply compaction steps
(similar as for VLSI design \cite{Len90}) to reduce the size of the
drawing.   We claim that after doing this, 
the visibility representation $\Gamma$ of a toroidal graph $G$ has grid-size 
at most $(m-n)\times n$.

To see this, observe that we need at most $n$ rows, 
since assigning row $i$ to vertex $v_i$ will certainly place it high enough
and the rows for $s,t$ can be deleted during compaction.
As for the number of columns, each column must contain at least one edge, else
it could have been deleted.
Furthermore, we used a bipolar orientation of $\hat{G}$, which means
that every vertex other than $s$ and $t$ has both an incoming and an outgoing
edge.   Since $\hat{G}$ is obtained from $G_{st}$ by duplicating edges, the
same holds in $G_{st}$.
Vertices $s$ and $t$ are removed in the final visibility representation
(but their incident edges remain and are re-combined).   With the standard
compaction steps, therefore at least one column at each vertex $v$ is used for
two edges incident to $v$.   It follows that each vertex saves at least one column,
hence the number of columns is $m-n$.

\subsection{Run-time}

Following the steps of our algorithm, it is very clear that our visibility
representations can be found in polynomial time.   
In fact, the drawing in Theorem~\ref{thm:cyl} can be found in linear
time with standard-approaches:
do not explicitly maintain the $x$-coordinates, but store the drawing implicitly
by computing $x$-spans of vertex-segments and $x$-offsets of edge-segments
from the left endpoints of their lower endpoints.    The final drawing can
then be computed with one pass over the entire graph after all vertices have
been placed.

Unfortunately finding the drawings in Theorems~\ref{thm:tor} and \ref{thm:Klein}
may take superlinear time since the supergraph $\hat{G}$ may have many extra 
edges.  If $G_{st}$ has $\Omega(n)$
disjoint edge-cuts that separate $s$ and $t$, then each of the $|\Pi|$
paths must duplicate an edge in each edge-cut, leading to $\Omega(m+|\Pi|n)$ 
edges for $\hat{G}$.  One can show that $|\Pi|\in O(\sqrt{n})$
can be achieved, because any toroidal graph has a non-contractible cycle
of length $O(\sqrt{n})$ \cite{AH78}, and we can use such a cycle in the dual
graph to find an embedding where $O(\sqrt{n})$ edges cross the horizontal side
and hence necessitate a path in $\Pi$.  With this choice we get
$|\hat{G}|\in O(n^{1.5})$ and run-time $O(n^{1.5})$.

Reducing this to linear time seems not implausible: we 
need the paths in $\Pi$ only to steer us towards placing edges in
the visibility representation at a suitable place, and it may be
possible to encode this in a smaller data structure that permits
linear run-time.  This remains for future work.

\section{Other Drawing Styles} 

We close the paper by discussing how our results do (or do not)
imply results in some other graph drawing styles that are closely related to visibility
representations.  The first drawing style that we consider are
{\em orthogonal point-drawings}, where vertices
are represented by points and every edge is a polygonal curve between
its endpoints that uses only horizontal and vertical segments and 
does not intersect other edges or vertices.      (These can only
exist if the graph has maximum degree at most four.)

\begin{theorem}
Every toroidal graph with maximum degree four has an orthogonal
point-drawing on the rectangular flat torus.  
Every Klein-bottle graph with maximum degree four has an orthogonal
point-drawing on the flat Klein bottle.  
\end{theorem}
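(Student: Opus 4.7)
The plan is to invoke Theorem~\ref{thm:tor} (respectively Theorem~\ref{thm:Klein}) to build a visibility representation $\Gamma$ of $G$ on the rectangular flat torus (resp.\ rectangular flat Klein bottle), and then convert $\Gamma$ into an orthogonal point-drawing on the same surface by the classical visibility-to-orthogonal construction.  For each vertex $v$, whose vertex-segment in $\Gamma$ is $[x_l,x_r]\times\{y_v\}$, pick a point $p_v=(x_v,y_v)$ on the segment, shrink the segment to $p_v$, and reroute each incident edge-segment by inserting a short horizontal piece along $y=y_v$ from its attachment point to $p_v$.  Because each detour lies inside the horizontal strip formerly occupied by the vertex-segment of $v$, it crosses no other vertex or edge, and each edge picks up at most one bend at each of its two endpoints.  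The result is a set of pairwise disjoint orthogonal polylines on the flat surface.

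For the output to be a valid orthogonal drawing, each vertex must use at most one edge per port $N$, $S$, $E$, $W$.  I would pick $p_v$ so that the $N$ (resp.\ $S$) port carries the at most one edge that attaches at $(x_v,y_v)$ from above (resp.\ below), while the $W$ (resp.\ $E$) port carries one additional edge rerouted leftward (resp.\ rightward), regardless of whether that rerouted edge was originally above or below the vertex-segment of $v$.  This port assignment succeeds provided that at most two edges attach from above and at most two from below.  Since $G$ has maximum degree four, the main obstacle is to avoid a $(3,1)$ or $(4,0)$ split of the at most four incident edges between ``above'' and ``below'': in such a split, the three edges on the heavy side would need three of the four available ports all routed to the same side, which is impossible without additional rerouting.

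To enforce the $(2,2)$ balance (or better), I would refine the bipolar orientation of $\hat{G}$ produced inside the proof of Theorem~\ref{thm:cyl} so that, in addition to respecting the path-system $\Pi$, every vertex of $G$ has in-degree and out-degree at most two.  The auxiliary source $s$ and sink $t$ are removed at the end, so they impose no constraint on the final orthogonal drawing.  For a $2$-connected plane graph with maximum degree four, prescribed source and sink, and a set of non-crossing $s$-$t$-paths to be consistently oriented, such a $2$-$2$-balanced bipolar orientation can be obtained by a canonical-ordering argument of Kant type applied to the plane graph produced by splitting at $\Pi$: the degree-four constraint ensures that each vertex, when inserted in the canonical order, has at most two already-placed neighbours, yielding the required in/out-degree bound.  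Combining this refinement with Theorems~\ref{thm:tor} and~\ref{thm:Klein} gives a visibility representation with the 2-above/2-below property at every vertex of $G$, and the conversion described above then produces the desired orthogonal point-drawing.
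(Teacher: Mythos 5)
Your high-level plan is the same as the paper's: invoke Theorems~\ref{thm:tor} and~\ref{thm:Klein} and then apply the Tamassia--Tollis visibility-to-orthogonal conversion \cite{TT89}, observing that this conversion is purely local and hence carries over to the flat torus and flat Klein bottle. The paper's proof is essentially just this observation (plus a remark about subdividing loops first).

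However, the second half of your argument introduces an unnecessary, and in fact flawed, detour. You claim that a $(3,1)$ split of incident edges between above and below the vertex-segment cannot be handled ``without additional rerouting,'' and then you try to manufacture a $(2,2)$-balanced bipolar orientation via a Kant-type canonical ordering. Two problems. First, the worry is unfounded: the Tamassia--Tollis construction \cite{TT89} handles $(3,1)$ splits directly --- the middle edge on the heavy side goes straight into the $N$ (or $S$) port, while the two flanking edges bend once and enter through the $W$ and $E$ ports; only the one edge on the light side uses the remaining port. The only genuinely problematic configuration for a point-vertex of degree four would be a $(4,0)$ split, and that cannot occur here: after removing $s$ and $t$, every remaining vertex of $\hat{G}$ has at least one incoming and one outgoing edge in the bipolar orientation, so at least one edge attaches from below and at least one from above. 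Second, even setting aside that the refinement is unnecessary, your justification for it does not hold: in a canonical ordering of a plane graph of maximum degree four, a vertex may well have three already-placed neighbours at the time it is inserted, so ``at most two already-placed neighbours'' does not follow from the degree bound, and it is not clear that a bipolar orientation with in- and out-degree at most two respecting $\Pi$ always exists. You should drop this entire refinement and simply cite \cite{TT89}, noting that its local replacement rules apply verbatim on a flat surface and that no $(4,0)$ vertex arises.
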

\begin{proof}
Tamassia and Tollis \cite{TT89} showed how to create orthogonal
point-drawings by starting with a visibility representation and
replacing vertex-segments locally by points and polygonal
curves that connect to the edge-segments.    The exact same
transformations can be applied to any visibility representation that
lies on a flat representation, so using it with
Theorem~\ref{thm:tor} and Theorem~\ref{thm:Klein} (after subdividing
loops, if any) gives 
the desired orthogonal point-drawings.
\end{proof}

Two other related drawing styles are grid contact and 
tessellation representations.   
A {\em bipartite graph} has a vertex-partition $V=W\cup B$ such
that there are no edges within $W$ or within $B$.   
In a {\em grid contact representation} of a bipartite graph, 
the vertices of $W$ and $B$ are assigned to horizontal
and vertical segments, respectively, with all segments disjoint 
except that any segment of one kind may touch at both of its ends an
interior point of a segment of the other kind, and such a common
point occurs only if the two vertices are adjacent.   
See Fig.~\ref{fig:K44}(b).
It is well-known \cite{FMP91} that
every planar bipartite graph has a grid contact representation in the plane,
and Mohar and Rosenstiehl \cite{MR98} showed that any toroidal bipartite graph has a grid
contact representation on the flat (not necessarily rectangular) torus. 
A {\em tessellation representation} of a graph $G$ is a grid contact
representation of the 
bipartite
graph whose vertices are the faces and vertices of $G$ and whose edges
are the incidences between them.%
\footnote{In contrast to earlier work \cite{MR98}, we use here {\em weak} models, where not all adjacencies that
could be added must exist.}   
See Fig.~\ref{fig:K44}(c).

\begin{figure}[ht]
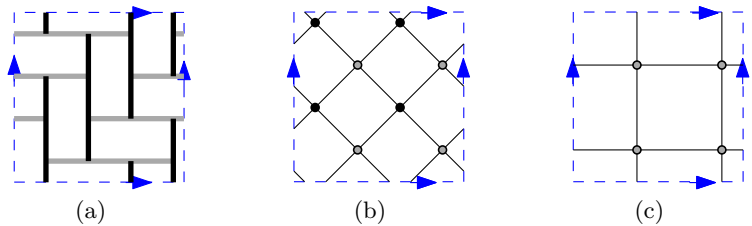

\hspace*{\fill}
\subfigure[~]{\includegraphics[scale=1,page=3]{K44.pdf}}
\hspace*{\fill}
\subfigure[~]{\includegraphics[scale=1,page=1]{K44.pdf}}
\hspace*{\fill}
\subfigure[~]{\includegraphics[scale=1,page=2]{K44.pdf}}
\hspace*{\fill}
\caption{(a) A set of segments that is a grid contact representation of $K_{4,4}$ (shown in (b)) or
a tessellation representation of the graph in (c).}
\label{fig:K44}
\end{figure}

Mohar and Rosenstiehl constructed tessellation representations of toroidal
graphs (on a flat torus), from which their results on grid contact representations
and visibility representations follow easily.   They must permit a non-rectangular
flat torus because they reduce their graph to $M_0$ (or another single-vertex graph with loops),
which cannot be represented on a rectangular flat torus.
But does it help to have no loops?

\begin{conjecture}
\label{conj:tess}
Every toroidal graph without loops has a tessellation
representation on the rectangular flat torus.
\end{conjecture}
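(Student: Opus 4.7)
The plan is to mirror the constructive approach used for Theorem~\ref{thm:tor}, adapted to produce tessellation representations directly. Given a toroidal graph $G$ without loops, I would first turn $G$ into a map and modify it into the plane graph $G_{st}$ by splitting at edges crossing the bottom boundary of $Q$, and then choose a path system $\Pi$ in a supergraph $\hat{G}$ of $G_{st}$ via Lemma~\ref{lem:choosePaths}. The technical core of the proof would be a tessellation-representation analogue of Theorem~\ref{thm:cyl}: any such $\hat{G}$ admits a tessellation representation on the flat cylinder in which each $\pi_i \in \Pi$ lies on an exclusive column.

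For this analogue, I would use an incremental construction parallel to the proof of Theorem~\ref{thm:cyl}, adding vertices in a path-respecting $st$-order supplied by Lemma~\ref{lem:goodSTorder}. When vertex $v_i$ is added, in addition to assigning columns to outgoing edges, one would (a) close the face-segments of faces whose last vertex in the order is $v_i$, (b) extend the face-segments of faces passing through $v_i$, and (c) open new face-segments for faces whose first vertex is $v_i$. Face-segments can be placed in columns inserted between edge-columns, arranged so that the left-to-right order around the horizontal segment of $v_i$ matches the cyclic order of vertex-face incidences at $v_i$.

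To finish, as in the proof of Theorem~\ref{thm:tor}, I would remove the segments of $s$ and $t$ and reconnect each split edge $(s_i,t_i)$ along the exclusive column $\ell_i$ of $\pi_i$. A face of $G$ that is cut by the bottom boundary produces several face-segments in the cylinder representation---one incident to $s$, one incident to $t$, and potentially more pieces interior to the cylinder if the face's boundary crosses the bottom multiple times---and for the torus representation these pieces must be aligned on a common column so that they merge into a single vertical segment. This alignment can be enforced by inserting empty columns, exactly as in the proof of Theorem~\ref{thm:Klein}.

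The hardest part will be the tessellation-representation analogue of Theorem~\ref{thm:cyl}. Simultaneously tracking face-segments during the incremental construction is substantially more intricate than tracking edge-columns, since each face-segment must touch its boundary vertices in the prescribed cyclic order and the placement must leave enough flexibility for the final alignment. Moreover, when a face of $G$ is cut into more than two pieces, the final alignment is multi-way rather than pairwise, and merely inserting columns between pairs may not suffice. A promising direction is to construct a ``dual'' path system in the vertex-face incidence graph that constrains the placement of face-segments, analogous to how $\Pi$ constrains edge-columns; establishing the existence of such a coupled primal/dual path system (some face-theoretic analogue of Lemma~\ref{lem:choosePaths}) is, I expect, the real challenge in proving the conjecture.
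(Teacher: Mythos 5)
This statement is presented in the paper as an open \emph{conjecture}, not a theorem, and the paper gives no proof of it. In fact, the paper explicitly discusses why the obvious route fails: Mohar and Rosenstiehl's conversion from visibility representations to tessellation representations works only in the ``strong'' model where every visibility between vertex-segments corresponds to an edge, which forces all faces to be triangles. So Theorem~\ref{thm:tor} yields Conjecture~\ref{conj:tess} only for toroidal graphs whose faces are all triangles; the general case is left open.

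Your proposal is a genuinely different direction---rather than converting a visibility representation, you try to build a tessellation representation incrementally by tracking face-segments as well as vertex rows. But the obstacles you flag at the end are real, and they are not minor technicalities: a face-segment has many incident vertices (not just two endpoints, as an edge-segment does), so the invariant maintenance during the incremental sweep is qualitatively harder, and a face cut by the horizontal boundary can split into several pieces whose alignment is multi-way, so the pairwise column-insertion trick from the proof of Theorem~\ref{thm:Klein} does not extend. You are right that some coupled primal/dual analogue of Lemma~\ref{lem:choosePaths} would be needed, and no such lemma is established here or in the paper. In short, this is a reasonable research plan for attacking the conjecture, honestly annotated with its gaps, but it is not a proof---and the paper itself does not claim one.
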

\begin{conjecture}
Every bipartite toroidal graph without loops has a grid contact representation
on the rectangular flat torus.
\end{conjecture}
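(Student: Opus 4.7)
The plan is to mirror the proof of Theorem~\ref{thm:tor}, replacing the visibility representation with a grid contact representation at each step. The first technical point is to preserve bipartiteness when constructing $G_{st}$: since the new vertices $s,t$ are incident to many vertices of mixed colour in general, I would subdivide each edge $(s_i,t_i)$ that crosses the bottom boundary of $Q$ twice (a single subdivision would destroy bipartiteness), placing one subdivision vertex just above the bottom side and one just below the top side of $Q$. Then $s$ and $t$ are each adjacent to subdivision vertices belonging to a single colour class and can be consistently coloured, giving a bipartite plane graph $G_{st}$. The path-system construction of Section~\ref{subse:choosePaths} carries over unchanged (duplicating edges does not affect bipartiteness), producing a bipartite 2-connected plane supergraph $\hat{G}$ equipped with a non-crossing edge-disjoint system $\Pi$ of $s$-$t$ paths.

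The crux is a bipartite analogue of Theorem~\ref{thm:cyl}: every 2-connected bipartite plane graph with two vertices $s\neq t$ and a non-crossing edge-disjoint $s$-$t$ path system $\Pi$ admits a grid contact representation on the flat cylinder in which every $\pi\in\Pi$ lies on an exclusive column. In the plane, grid contact representations of 2-connected bipartite graphs can be built from a \emph{transversal structure}, i.e., an alternating orientation of the edges around each (necessarily even-length) face; this is essentially the bipartite counterpart of the bipolar orientation used in Lemma~\ref{lem:goodSTorder}. My plan is to first establish that $\hat{G}$ admits a transversal structure for which every path of $\Pi$ is coherently oriented from $s$ to $t$, and then use it to run an incremental sweep in the spirit of the construction in Theorem~\ref{thm:cyl}, placing the vertices of the two colour classes as horizontal and vertical segments respectively and inserting fresh columns whenever a new path-edge must be extended upward.

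The main obstacle will be proving the existence of a transversal structure respecting a prescribed path system, and then showing that the resulting incremental construction remains feasible on a cylinder (rather than on a plane with a distinguished outer face). Once that bipartite analogue of Theorem~\ref{thm:cyl} is established, the reassembly step is routine: remove the segments of $s$, $t$, and of the subdivision vertices, extend each path-column $\ell_i$ across the horizontal boundary of $Q$, and re-identify the top and bottom of $Q$ to recover a grid contact representation of the original bipartite toroidal graph $G$ on the rectangular flat torus.
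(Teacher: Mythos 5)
This statement is a \emph{conjecture} in the paper (Conjecture~2): the paper does not prove it, so there is no proof to compare your proposal against. The paper even flags why the result does not follow routinely from Theorem~\ref{thm:tor}: the Mohar--Rosenstiehl conversion from visibility representations to tessellation and grid contact representations \cite{MR98} requires the \emph{strong} visibility model, which forces all faces to be triangles, and the weak visibility representations produced by Theorem~\ref{thm:tor} do not feed into that conversion.

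As a proof attempt, your proposal has a genuine gap at the very first step. You subdivide each boundary-crossing edge $(s_i,t_i)$ twice, producing $a_i$ near the bottom side and $b_i$ near the top side, and then attach $s$ to all $a_i$. In any proper $2$-coloring of $G$ the endpoints $s_i$ and $t_i$ lie in opposite classes, so $a_i$ lies in the class opposite $s_i$; but across different $i$ the vertices $s_i$ need not lie in the same class, and nothing in the construction (tambourine re-routing, shearing, etc.) forces the edges crossing the bottom side to be cut ``consistently'' with respect to the bipartition. When the $s_i$'s mix colors, the $a_i$'s mix colors too, and $s$ cannot be colored. Changing the number of subdivisions does not help: an odd number destroys bipartiteness of that edge outright (a single subdivision vertex would have to be adjacent to both color classes), while any even number still leaves $a_i$ in the class opposite $s_i$. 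You would need a different reduction to a bipartite plane graph---for instance splitting $s$ into one vertex per color class, with nontrivial care about where the two copies sit in the cylindrical embedding---before the rest of the plan can even be attempted.

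Beyond that, you have correctly identified the second major obstacle (an analogue of Lemma~\ref{lem:goodSTorder} for a $2$-orientation or separating-decomposition-style structure that respects a prescribed path system, plus the cylindrical incremental sweep replacing Theorem~\ref{thm:cyl}), but you have left both open. That is exactly where the difficulty of the conjecture lies; as written, this is a research program rather than a proof.
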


At first sight one might think that Theorem~\ref{thm:tor} implies
Conjecture~\ref{conj:tess}, because Mohar and Rosenstiehl \cite{MR98} show that 
a visibility representation can be converted to a tessellation representation.
Alas, their definition of  ``visibility
representation'' uses the `strong' model where {\em all}
visibilities must lead to an edge, hence faces are triangles,
and this is vital in their proof.
On the positive side, their proof does not affect the shape of
the flat representation, so using it one can show that
Conjecture~\ref{conj:tess} holds for
toroidal graphs where all faces are triangles.

%

Finally we are interested in 
{\em segment intersection representations}, i.e., every vertex is assigned
to a segment (of arbitrary slope) on the flat torus, with
segments intersecting if and only if the vertices are adjacent.
Such representations exist
for all planar graphs \cite{CG09}, and one proof of this proceeds by
representing a planar graph as the intersection-graph
of L-shaped curves in the plane \cite{GIP17} and then converting the
L-shaped curves into segments \cite{MP92}.
The corresponding questions on the flat torus
appear to be open even if we drop `rectangular': 
\begin{question}
Does every simple toroidal graph have a segment intersection representation
on the flat torus?
\end{question}
\begin{question}
Is every simple toroidal graph the intersection-graph of L-shaped curves
on the flat torus?
\end{question}
\begin{question}
If a graph is the intersection-graph of L-shaped curves
on the flat torus, then is it also the intersection-graph
of segments on the flat torus?
\end{question}

Finally all these questions could be asked also for graphs
embedded on the Klein bottle (or other surfaces, such as the
projective plane).

\bibliographystyle{splncs04}
\bibliography{journal,full,gd,papers}

\end{document}